\newcommand{\be}{\begin{eqnarray}}
\newcommand{\ee}{\end{eqnarray}}
\DeclareMathOperator{\supp}{supp}
\def\R{\mathbb{R}}
\newcommand{\ep}{\varepsilon}
\newcommand{\om}{\Omega}
\newcommand{\diag}{{\rm diag}\,}
\newcommand{\1}{{\bf 1}}
\newcommand{\cof}{{\rm cof}\,}
\def\diag{\mbox{\rm diag}\,}
\def\det{\mbox{\rm det}\,}
\def\mint{-\kern-1.00em\int}
\newtheorem{thm}{Theorem}
\begin{document}\color{black}
\title{Geometry of polycrystals and microstructure}
%
%

\author{John M.  Ball\inst{1}\fnsep\thanks{\email{ball@maths.ox.ac.uk
}} \and
        Carsten Carstensen\inst{2}\fnsep\thanks{\email{cc@math.hu-berlin.de}} 
}

\institute{Mathematical Institute, University of Oxford, Andrew Wiles Building, Radcliffe Observatory Quarter, Woodstock Road, Oxford
OX2 6GG, U.K.
\and
           Department of Mathematics, Humboldt-Universit\"{a}t zu
Berlin, Unter den Linden 6, D-10099 Berlin, FRG 
          }
\selectlanguage{english}

\abstract{%
We investigate the geometry of polycrystals, showing that for polycrystals formed of convex grains the interior grains are polyhedral, while for polycrystals with general grain geometry the set of triple points is small. Then we investigate possible martensitic morphologies resulting from intergrain contact. For cubic-to-tetragonal transformations we show that homogeneous zero-energy microstructures matching a pure dilatation on a grain boundary necessarily involve more than four deformation gradients. We discuss the relevance of this result for  observations  of microstructures involving second and third-order laminates in various materials. Finally we consider the more specialized situation of bicrystals formed from materials having two martensitic energy wells (such as for orthorhombic to monoclinic transformations), but without any restrictions on the possible microstructure, showing how a generalization of the Hadamard jump condition can be applied at the intergrain boundary to show that a pure phase in either grain is impossible at minimum energy. 
}
\maketitle
\section{Introduction}
\label{intro}
In this paper we investigate the geometry of polycrystals and its implications for microstructure morphology within the nonlinear elasticity model of martensitic phase transformations \cite{j32,j40}. The rough idea is that the microstructure is heavily influenced by conditions of compatibility at grain boundaries resulting from continuity of the deformation.

 However, in order to express this precisely, it is first of all necessary  to give a careful mathematical description of the assumed grain geometry, something that is not often done even in mathematical treatments (a rare exception being \cite{taylor99}). In particular,  it is useful to be able to articulate the intuitively obvious fact that in the neighbourhood of most points of an interior grain boundary only two grains are present,  because it is at such points that it is easiest to apply compatibility conditions. 

A second issue is then to develop useful forms of the compatibilty conditions at such points, expressed in terms of deformation gradients, which on the one hand do not make unjustified assumptions about the microstructure morphology, and on the other hand can be exploited to draw conclusions about that morphology. 

The plan of the paper is as follows. In Section \ref{polycrystals} we give a precise description of grain geometry, defining interior and boundary grains, and the set of triple points. We then discuss the case of convex grains, showing that interior grains form convex polyhedra and that in 3D the set of triple points is a finite union of closed line segments. For possibly nonconvex grains we then show under weak conditions on the grain geometry  that  in 2D a polycrystal with $N$ grains can have at most $2(N-2)$ triple points, while  in arbitrary dimensions   the set of triple points is small. 

In Section \ref{micro} we address some examples in which compatibility at grain boundaries leads to restrictions on possible microstructures. First we show that, for a cubic-to-tetragonal transformation, a macroscopically homogeneous zero-energy microstructure matching a pure dilatation on the boundary must involve more than four values of the deformation gradient. We discuss the reasons why nevertheless second-order laminates, involving to a good approximation just four gradients in a single grain, are observed in materials, such as the ceramic ${\rm BaTiO}_3$ and RuNb alloys, which undergo cubic-to-tetragonal transformations. Then we consider the situation of a bicrystal with special geometry formed of a material undergoing a phase transformation with just two energy wells (such as cubic-to-orthorhombic), and without further assumptions on the microstructure give conditions under which a zero-energy microstructure must be complex, i.e. cannot be a pure variant in either grain; this analysis uses a generalization of the Hadamard jump condition developed in \cite{u5}. 

Finally in Section \ref{conclusion} we draw some conclusions and give some perspectives on possible future developments.
\section{Geometry of polycrystals}
\label{polycrystals}By a {\it domain} in $n$-dimensional Euclidean space $\R^n$, $n\geq 2$, we mean an open and connected subset of $\R^n$. (For the applications below   $n=2$ or $3$.) If $E\subset\R^n$ then $\overline E$ denotes the closure of $E$, $\partial E$ the boundary of $E$, and ${\rm int}\,E$ the interior of $E$. 
We consider a polycrystal which in a reference configuration occupies the bounded domain $\om\subset\R^n$. We suppose that the polycrystal is composed of a finite number  $N\geq 1$ of disjoint grains $\om_j, 1\leq j\leq N$, where each $\om_j$ is a bounded domain, so that  
\begin{equation}
\label{polyc}\om={\rm int}\,\bigcup_{j=1}^N\overline \om_j. 
\end{equation}
In general (see Theorem \ref{convexgrains} below) we cannot assume that the boundaries of the $\om_j$ are smooth. We will make various different assumptions concerning this below, but we always assume the minimal requirement  that each $\om_j$ is a {\it regular} open set, that is $\om_j={\rm int}\,\overline\om_j$. This avoids pathologies such as a grain consisting of an open ball with a single point at its centre removed. We can divide the grains into {\it interior grains} for which $\partial\om_j\subset\bigcup _{k\neq j}\partial\om_k$, and {\it boundary grains}, for which $\partial\om_j\setminus \bigcup _{k\neq j}\partial\om_k$ is nonempty. Note that an interior grain can have points of its boundary lying in $\partial\om$ (see Fig.~\ref{grainspicture}). We denote by $D=\bigcup_{j=1}^N\partial\om_j$ the union of the grain boundaries, and by $$T=\bigcup_{1\leq i_1<i_2<i_3\leq N}\partial\om_{i_1}\cap  \partial\om_{i_2}\cap\partial\om_{i_3}$$
the set of {\it triple points}, i.e. points which belong to the boundaries of three or more grains.
\begin{figure}[hbt]
\centerline{\includegraphics[width=3.83in,height=2.15in,keepaspectratio]{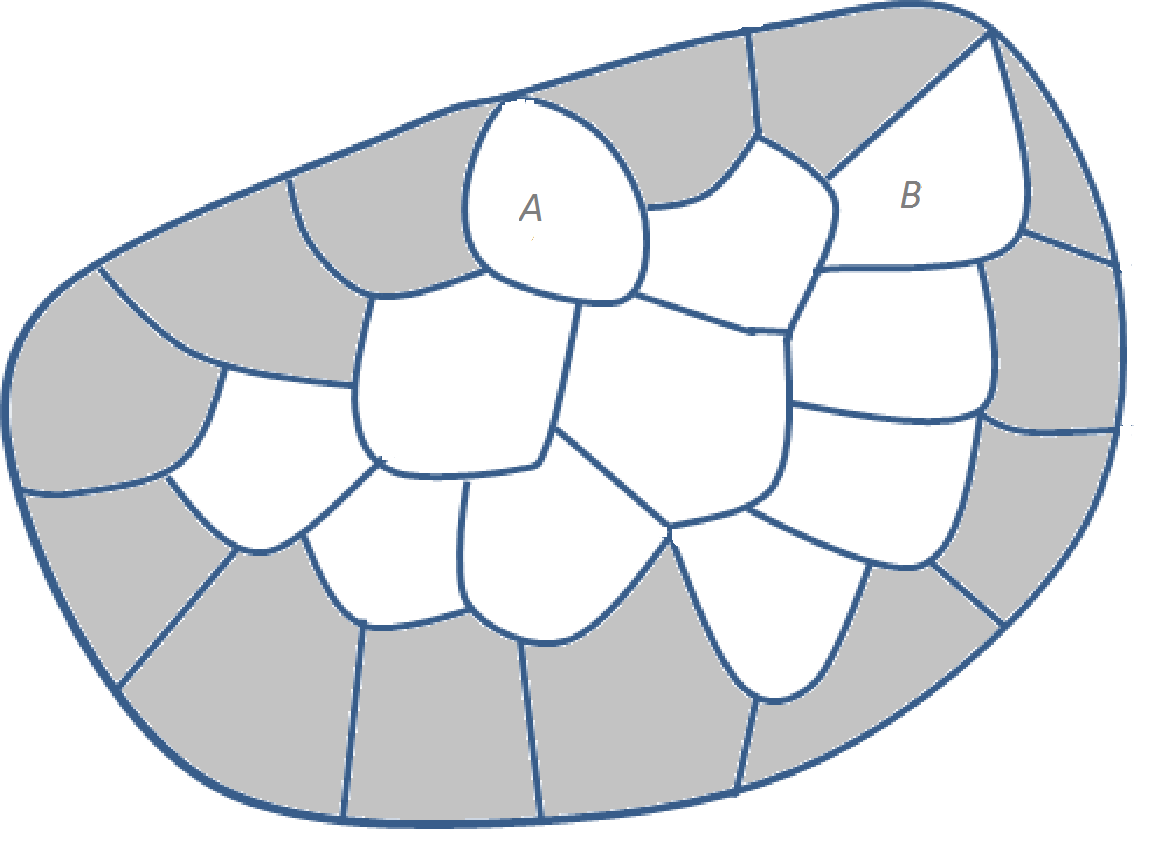}}
\caption{Schematic polycrystal grain structure in 2D, with boundary grains shaded. The boundaries of the two interior grains $A$ and $B$ have points in common with the boundary $\partial\om$ of the polycrystal. }\label{grainspicture}
\end{figure}
\subsection{Convex grains}
\label{convex}
We recall that a set $E\subset \R^n$ is said to be {\it convex} if the straight line segment joining any two points $x_1, x_2\in E$ lies in $E$, i.e. $\lambda x_1+(1-\lambda)x_2\in E$ for all $\lambda\in [0,1]$. An {\it open half-space} is a subset $H$ of $\R^n$ of the form $H=\{x\in\R^n: x\cdot e<k\}$ for some unit vector $e\in\R^n$ and constant $k$. A nonempty bounded open subset $P\subset \R^n$ is a {\it convex polyhedron} if $P$ is the intersection of a finite number of open half-spaces. The {\it dimension} of a convex set $E\subset\R^n$ is the dimension of the affine subspace of $\R^n$ spanned by $E$. The following statements, which are probably known, are elementary consequences of the hyperplane separation theorem (see, for example, \cite[Theorem 11.3]{rockafellar70}), which asserts that if  subsets $E,F$ are disjoint convex subsets of $\R^n$ with $E$ open, then there exist a unit vector $e\in\R^n$ and a constant $k$ such that 
$$x\cdot e<k\leq y\cdot e \mbox{ for all } x\in E, y\in F.$$
In particular, taking $F=\{z\}$ with $z\in\partial E$, any open convex set is regular. 
\begin{thm}
\label{convexgrains}
Suppose that each grain $\om_j$ is convex. Then\\
(i) each $\om_j$ is the intersection of $\om$ with a finite number of open half-spaces, \\
(ii) each interior grain is a convex polyhedron,\\
(iii) the set $T$ of triple points is a finite union of closed convex sets of dimension less than or equal to $n-2$.
\end{thm}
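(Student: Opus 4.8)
\emph{Proof strategy.} All three parts rest on one construction, used repeatedly: given two distinct grains $\om_i,\om_k$, the hyperplane separation theorem applied to the open convex set $\om_i$ and the convex set $\om_k$ yields a unit vector $\e$ and a constant $c$ with $\om_i\subset H:=\{x:x\cdot\e<c\}$ and $\overline{\om_k}\subset\{x:x\cdot\e\ge c\}$. Thus the open half-space $H$ contains $\om_i$ and is disjoint from $\overline{\om_k}$, and --- since $\om_k$ is open and nonempty, hence cannot meet its bounding hyperplane --- $\om_k$ actually lies in the opposite open half-space. The plan is to fix a grain $\om_j$ and record, for each $k\ne j$, such a half-space $H_k\supset\om_j$ with $\overline{\om_k}\cap H_k=\emptyset$. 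For (i) I would then prove $\om_j=\om\cap\bigcap_{k\ne j}H_k$, which exhibits $\om_j$ as the intersection of $\om$ with the $N-1$ open half-spaces $H_k$. One inclusion is immediate; for the other, if $x\in\om$ lies in every $H_k$ ($k\ne j$) then $x\notin\overline{\om_k}$ for each such $k$, so, these being finitely many closed sets and $\om$ open, some ball $B$ about $x$ has $B\subset\om\subset\bigcup_l\overline{\om_l}$ (by \reff{polyc}) and $B\cap\overline{\om_k}=\emptyset$ for all $k\ne j$; hence $B\subset\overline{\om_j}$ and $x\in{\rm int}\,\overline{\om_j}=\om_j$ by regularity.

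For (ii), when $\om_j$ is an interior grain I would sharpen the identity to $\om_j=\bigcap_{k\ne j}H_k=:P$, i.e.\ remove the factor $\om$. As $\om_j\subset P$ with both sets convex and open, a proper inclusion would allow one to join an interior point of $\om_j$ to a point of $P\setminus\om_j$ by a line segment and take the last point $w$ of that segment lying in $\overline{\om_j}$; then $w\in\partial\om_j$, and $w\in P$ because $P$ is convex. But an interior grain satisfies $\partial\om_j\subset\bigcup_{k\ne j}\partial\om_k$, so $w\in\overline{\om_k}$ for some $k\ne j$ --- contradicting $w\in P\subset H_k$ and $\overline{\om_k}\cap H_k=\emptyset$. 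Hence $\om_j=P$, a nonempty bounded open intersection of finitely many open half-spaces, that is, a convex polyhedron.

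For (iii), I would first write $T$ as the finite union, over triples $i_1<i_2<i_3$, of the sets $S:=\partial\om_{i_1}\cap\partial\om_{i_2}\cap\partial\om_{i_3}$, noting that $S=\overline{\om_{i_1}}\cap\overline{\om_{i_2}}\cap\overline{\om_{i_3}}$ --- a point in all three closures cannot, by disjointness and openness, lie in any one of the three grains, so it lies in all three boundaries --- whence each $S$ is closed and convex. To estimate $\dim S$, let $\Pi_{pq}$ denote the separating hyperplane of the pair $(\om_{i_p},\om_{i_q})$; then $S\subset\overline{\om_{i_p}}\cap\overline{\om_{i_q}}\subset\Pi_{pq}$ for each of the three pairs. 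If two of $\Pi_{12},\Pi_{13},\Pi_{23}$ are distinct, $S$ is contained in their intersection, an affine subspace of dimension at most $n-2$, and we are done. The only remaining possibility is that all three coincide with a single hyperplane $\Pi$; but then each of $\om_{i_1},\om_{i_2},\om_{i_3}$ lies strictly in one of the two open half-spaces bounded by $\Pi$, so two of them lie in the same half-space, contradicting that this pair is separated by $\Pi$. Thus $\dim S\le n-2$ in every case, and $T$ is the asserted finite union (discarding any empty $S$).

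The step I expect to be the genuine obstacle is the dimension count in (iii): a single pair of separating hyperplanes confines $S$ only to codimension one, so improving this to codimension two forces one to use all three pairwise separators together with the small but essential fact that an open convex set cannot lie inside a hyperplane --- the same fact that places each grain strictly on one side of $\Pi$, powering the pigeonhole step, and that is also what makes the identity $S=\overline{\om_{i_1}}\cap\overline{\om_{i_2}}\cap\overline{\om_{i_3}}$ valid. The other parts are then routine once the separation construction is in place.
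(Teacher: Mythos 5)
Your proposal is correct. Parts (i) and (ii) follow essentially the paper's own route: the same family of separating half-spaces $H_{j,k}$, the same identity $\om_j=\om\cap\bigcap_{k\neq j}H_{j,k}$ proved via regularity (your ball argument is just a spelled-out version of the paper's "interior point of $\overline\om_j$" step), and the same segment-to-the-boundary contradiction for interior grains. Part (iii), however, is a genuinely different and in fact more economical argument. The paper fixes a triple intersection $K=\overline\om_{i_1}\cap\overline\om_{i_2}\cap\overline\om_{i_3}$, invokes the existence of a relative interior point $\bar x$ of $K$ in its affine hull (Rockafellar, Theorem 6.2), and rules out dimension $n-1$ by building, from $\bar x$ and a point of each grain off the hull, open half-balls centred at $\bar x$ that must lie (by convexity and regularity) inside two of the three pairwise disjoint grains on the same side of the hyperplane -- a contradiction. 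You instead observe that $S$ is trapped in each of the three pairwise separating hyperplanes $\Pi_{pq}$ (since $\overline\om_{i_p}$ and $\overline\om_{i_q}$ lie in the two opposite closed half-spaces), and that the three separators cannot all coincide: strict separation of open grains plus the pigeonhole over the two sides of a single hyperplane would force two disjoint nonempty grains into the same open half-space while being separated by it. This gives codimension two in one stroke, needs neither the relative-interior theorem nor regularity of the grains in this part, and handles the degenerate (parallel separators / empty $S$) cases trivially; the paper's construction is more hands-on but uses the same underlying fact you isolate, namely that a nonempty open convex set cannot meet, or lie in, a separating hyperplane. Both arguments are sound and yield exactly the stated conclusion.
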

\begin{proof}  If $N=1$ there is nothing to prove, so we suppose that $N>1$.
By the hyperplane separation theorem, given a grain $\om_j$, for any $k\neq j$ there exists an open half-space $H_{j,k}$ such that $\om_j\subset H_{j,k}$ and $\overline\om_k\subset \R^n\setminus H_{j,k}$. Hence 
$$\om_j\subset \om\cap\bigcap_{k\neq j}H_{j,k}.$$ 
Let $x\in\om\cap\bigcap_{k\neq j}H_{j,k}$. Then since $\om\cap\bigcap_{k\neq j}H_{j,k}$ is open and disjoint from $\overline \om_k$ for $k\neq j$, it follows that $x$ is an interior point of $\overline\om_j$. Since $\om_j$ is regular, $x\in\om_j$, and hence $ \Omega_j = \Omega\cap
\bigcap_{k\neq j} H_{j,k}$. This proves (i).

Let $\om_j$ be an interior grain and suppose for contradiction that $x\in \bigcap_{k\neq j}H_{j,k}$ with $x\not\in\om_j$. Given any $x_0\in\om_j$ there exists a convex combination $y=\lambda x_0+(1-\lambda)x$, $\lambda\in[0,1]$, with $y\in\partial\om_j$. Since $\bigcap_{k\neq j}H_{j,k}$ is convex, $y\in\bigcap_{k\neq j}H_{j,k}$, and thus  $y\not\in\partial\om_k$ for $k\neq j$, contradicting that $\om_j$ is an interior grain. This proves (ii).

Given $1\leq i_1<i_2<i_3\leq N$ the set $K=\partial\om_{i_1}\cap\partial\om_{i_2}\cap\partial\om_{i_3}=\overline\om_{i_1}\cap\overline\om_{i_2}\cap\overline\om_{i_3}$ is closed and convex. Let $A$ denote the linear span of $K$. Then by \cite[Theorem 6.2]{rockafellar70}  there exists   a relative interior point $\bar x $ of $K$ in $A$,  that is for some $\ep>0$ the closed ball $\overline {B(\bar x,\ep)}=\{x\in\R^n:|x-\bar x|\leq\ep\}$ is such that $\overline {B(\bar x,\ep)}\cap A\subset K$. In particular the dimension of $K$, which by definition is the dimension of $A$, is less than $n$. Suppose for contradiction that the dimension of $K$ is $n-1$, so that $A=\{x\in\R^n:x\cdot e=k\}$ is a hyperplane. Then there exists a point $x_1\in \om_{i_1}$ which lies strictly on one side of $A$, say $x_1\cdot e<k$. Hence the closed convex hull of $x_1$ and $\overline {B(\bar x,\ep)}\cap K$ lies in $\overline\om_{i_1}$, and its interior contains the open half-ball $\{x\in\R^n:x\cdot e<k, |x-\bar x|<\ep'\}$ for some small $\ep'>0$. Since $\om_{i_1}$ is regular this half-ball is  a subset of $\om_{i_1}$. Repeating this argument for $i_2$ and $i_3$ we find a half-ball centre $\bar x$ which is a subset of two of the disjoint grains $\om_{i_1}, \om_{i_2},\om_{i_3}$. This contradiction implies (iii).
\end{proof}
\noindent Part (iii) implies that if $n=2$ there are  finitely many triple points (see Theorem \ref{triplepointsbound} below for a more general statement), while if $n=3$ then $T$ is the union of finitely many closed line segments. 
\subsection{Triple points in 2D}
\label{2D}
A famous counterexample in topology, the Lakes of Wada (see, for example, \cite{hockingyoung2nd, gelbaumolmsted}), shows that there can be three (or more) simply-connected, regular, open subsets of the closed unit square $[0,1]^2$ in $\R^2$ having a common boundary. Thus there is no hope to prove that the set $T$ of triple points is finite for $n=2$ without imposing further restrictions on the geometry of the grains $\om_j$. 
We will assume that each grain is a   bounded domain in $\R^2$ which is the region inside a Jordan curve, that is a non self-intersecting continuous loop in the plane. Such curves can be highly irregular. Nevertheless we can give a precise bound on the number of triple points.
\begin{thm}[\cite{u5}]
\label{triplepointsbound}
Assume that each grain $\om_j, j=1,\ldots,N$, is the region inside a Jordan curve. Then there are a finite number $m$ of triple points, and $m\leq 2(N-2)$.
\end{thm}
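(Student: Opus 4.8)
The natural approach is to build a planar graph whose structure encodes the triple points and then apply Euler's formula. First I would argue that a triple point can be assumed to lie on the boundaries of exactly three grains after a generic perturbation, or — more carefully, since we want an exact bound — treat a point lying on $k\ge 3$ grain boundaries as contributing in the graph-theoretic count in a way compatible with the $2(N-2)$ bound. The cleanest route: since each grain $\om_j$ is the region inside a Jordan curve, the Jordan curve theorem gives that $\partial\om_j$ separates the plane into the bounded component $\om_j$ and an unbounded component. I would form a graph $G$ whose vertices are the triple points $T$ (which I must first show is finite) and whose edges are the arcs of grain boundaries joining consecutive triple points along each Jordan curve.

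\emph{Finiteness of $T$.} Before counting I would establish that $T$ is finite. Here is where a compactness/separation argument enters: suppose $T$ were infinite, so it has an accumulation point $x_0$. Near $x_0$, at least three of the Jordan curves $\partial\om_{i_1},\partial\om_{i_2},\partial\om_{i_3}$ must pass through points of $T$ accumulating at $x_0$; using that each $\om_{i_\ell}$ is connected and open and that the three grains are pairwise disjoint, one shows the three boundary curves cannot interleave infinitely often near $x_0$ without forcing one grain to be disconnected or to meet another — a contradiction with the Jordan curve structure. This step is the main obstacle: making "cannot interleave infinitely often" rigorous requires a careful local topological argument, presumably exploiting that between two consecutive triple points on $\partial\om_{i_1}$ the arc borders exactly one other grain, so the cyclic order of grains around $\partial\om_{i_1}$ is locally constant.

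\emph{The Euler-formula count.} Once $T=\{p_1,\dots,p_m\}$ is finite, I would set up the planar graph $G$ as follows: vertices are the $p_i$ together possibly with one vertex "at infinity"; edges are the maximal boundary arcs between consecutive triple points. Each grain $\om_j$ is a face (or union of a bounded face with the interior structure), and the unbounded component of $\R^2\setminus\om$ supplies another face. The key numerics: at each triple point at least three edges meet, so $2|E|\ge 3|V|$, i.e. $|E|\ge \tfrac32 m$; and the number of faces is at most $N+1$ (the $N$ grains plus the exterior). Euler's formula $|V|-|E|+|F|=2$ (for a connected plane graph; one handles disconnectedness by adding edges, which only helps) then gives $m - |E| + |F| = 2$, so $m = |E| - |F| + 2 \ge \tfrac32 m - (N+1) + 2$, whence $\tfrac12 m \le N-1$... which gives $m\le 2(N-1)$, slightly weaker than claimed.

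\emph{Sharpening to $2(N-2)$.} To recover the factor $N-2$ rather than $N-1$, I would refine the face count: the unbounded face and at least one more face must be "non-essential" in the sense that boundary grains contribute arcs of $\partial\om$ that are not counted among edges between triple points, or — more likely the intended argument — one observes that each triple point has degree at least $3$ but also that the two faces "removed" are the exterior face and the fact that a Jordan-curve grain structure with $N$ grains realizing triple points forces at least two grains to behave as "leaves." Concretely, I would look for an inequality of the form $|F|\le N$ (not $N+1$) by noting the exterior region together with the polycrystal complement can be merged, and simultaneously $2|E|\ge 3m + (\text{extra edges from }\partial\om)$; combining $m = |E|-|F|+2$, $|E|\ge\tfrac32 m$, $|F|\le N$ yields $m\le 2(N-2)$. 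Verifying precisely which combinatorial slack delivers the "$-2$" is the delicate bookkeeping step, but it is routine graph theory once the topological finiteness and the local "exactly one neighbouring grain per arc" structure are in hand.
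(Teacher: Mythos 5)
Your overall strategy (planar graph plus Euler's formula) is in the same spirit as the paper's cited argument, but as written the proposal has two genuine gaps, both of which you acknowledge and neither of which is routine. First, finiteness: your graph is built from arcs ``between consecutive triple points along each Jordan curve,'' which already presupposes that the triple points on each curve are isolated and finitely many --- exactly what is to be proved; the accumulation-point argument you sketch (``cannot interleave infinitely often'') is the hard topological content and is not supplied. Second, the Euler count you actually carry out only yields $m\le 2(N-1)$, and the bookkeeping you hope will produce the extra $-2$ is not only unverified but rests on claims that can fail: the face bound $F\le N+1$ is false in general, because the complement of the polycrystal may have bounded components (four bent grains arranged in a ring leave a hole, which is an additional face of your graph), and the claim that every triple point has degree at least $3$ in your arc-graph needs a local argument, since a boundary arc is shared by two grains and the boundary of a Jordan domain can be locally very irregular.

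Both difficulties disappear with the graph the paper's sketch (``as in the proof of the Four Colour Theorem'') is pointing at. Fix an arbitrary \emph{finite} subset $S\subset T$ with $|S|=m$, choose a point $v_j$ in each grain $\om_j$, and for each $p\in S$ select three grains whose boundaries contain $p$ and join each corresponding $v_j$ to $p$ by an arc lying in $\om_j$ except for its endpoint $p$; this is possible because every boundary point of a Jordan domain is accessible from inside (this is precisely where the Jordan-curve hypothesis enters), and the arcs within one grain can be chosen disjoint except at $v_j$. The result is a simple bipartite planar graph with $N+m$ vertices and $3m$ edges, and for such graphs Euler's formula together with the fact that every face has at least four sides gives $E\le 2V-4$, hence $3m\le 2(N+m)-4$, i.e. $m\le 2(N-2)$. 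Since this bound holds for every finite subset of $T$, the finiteness of $T$ and the bound $|T|\le 2(N-2)$ are obtained simultaneously: no prior finiteness argument, no face-per-grain accounting, and holes in the polycrystal are irrelevant. I would encourage you to rework the write-up along these lines rather than trying to patch the vertex-equals-triple-point count.
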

The bound is optimal, and attained for the configuration shown in Fig.~\ref{bound}.
 \begin{figure}[hbt]
\centerline{\includegraphics[width=3.07in,height=1.72in,keepaspectratio]{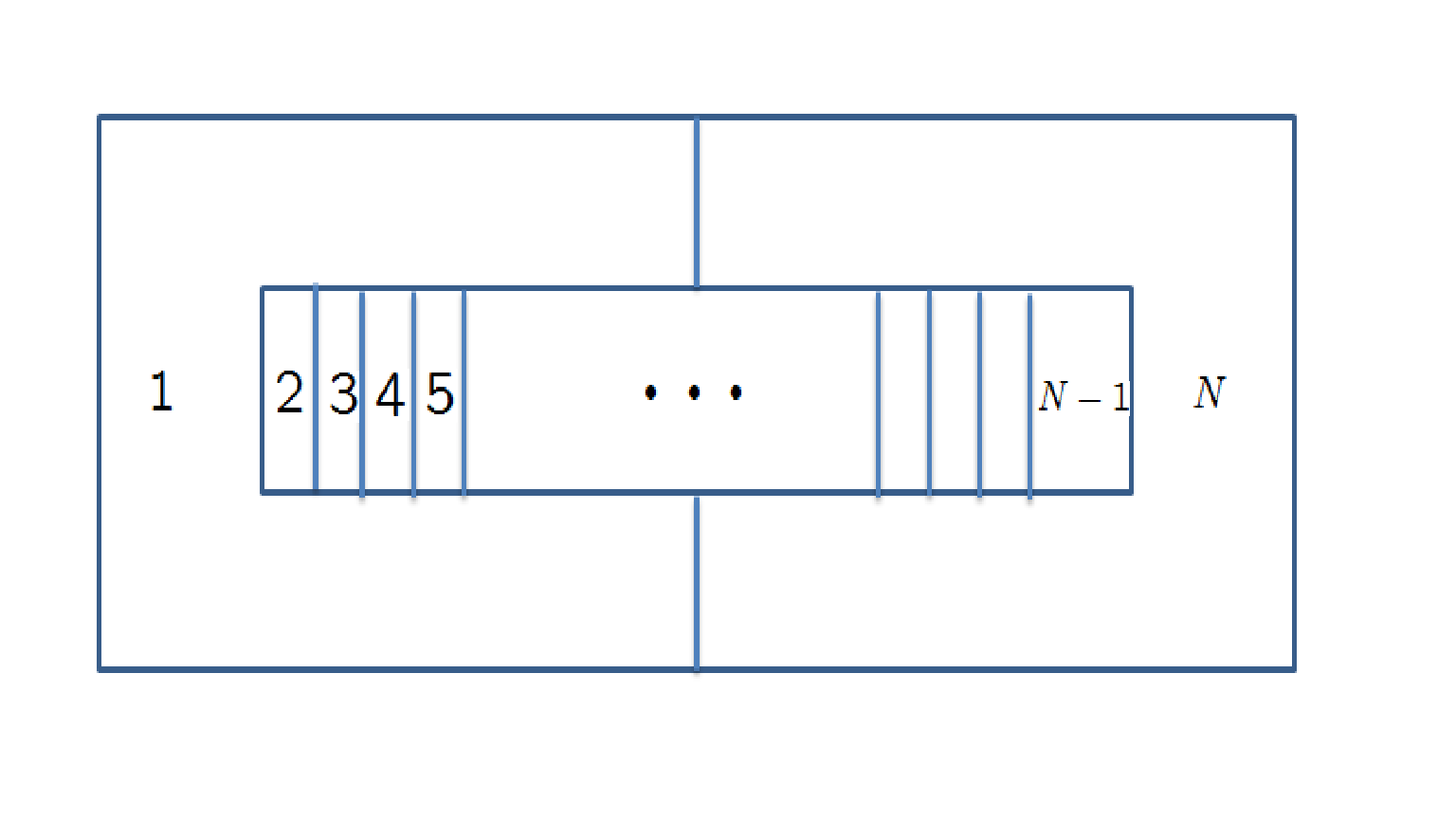}}
\caption{$N$ grains (labelled $1$ to $N$) in 2D with $2(N-2)$ triple points.}\label{bound}
\end{figure}
 The proof of Theorem \ref{triplepointsbound} involves a reduction to a problem of graph theory, as in the proof of the Four Colour Theorem for maps \cite{wilson14}, and use of Euler's formula relating the numbers of faces, vertices and edges of a polyhedron.
\subsection{Triple points in 3D}
\label{3D}
For dimensions $n=3$ and higher, we do not have as precise results as Theorem \ref{triplepointsbound}. However, we can prove under rather general conditions that the set $T$ of triple points is in some sense very small compared to the union of the grain boundaries $D$. We assume that the closure $\overline\om_j$ of each grain is a topological manifold with boundary, that is for each $x\in\bar\om_j$ there is a relatively  open neighbourhood $U(x)$ and a homeomorphism $\varphi$ between $U(x)$ and a relatively open neighbourhood of the closed half-space
$$\R^n_+:=\{(x_1,\ldots,x_n):x\cdot e_n\geq 0\},
$$
where $e_n=(0,\ldots,0,1)$. The precise details of this definition are not so important for this paper, but we note that if $n=2$ and $\om_j$ is the region inside  a Jordan curve then $\overline\om_j$ is a topological manifold with boundary, while for   $n\geq 2$ any domain whose boundary can be locally represented in suitable Cartesian coordinates by the graph of a continuous function,  and which lies on one side of its boundary, is also a topological manifold with boundary. Thus any geometry that is likely to be encountered in practice satisfies this condition. 
\begin{thm}[\cite{u5}]
\label{triplepointsnd}
Suppose that the closure $\overline\om_j$ of each grain is a topological manifold with boundary. Then the set 
$T$ of triple points is closed and nowhere dense in the union $D$ of grain boundaries, i.e. there is no point $x\in T$ and $\varepsilon>0$ such that $B(x,\ep)\cap D\subset T$.
\end{thm}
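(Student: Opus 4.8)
The plan has three parts. First, $T$ is closed: it is the finite union of the sets $\partial\om_{i_1}\cap\partial\om_{i_2}\cap\partial\om_{i_3}$, each an intersection of closed sets. Since $T$ is closed in $D$, being nowhere dense in $D$ is the same as having empty interior in $D$, so I would suppose for contradiction that $B(x_0,\ep)\cap D\subset T$ for some $x_0\in T$ and $\ep>0$, and derive a contradiction. A preliminary observation I would record: because each $\overline\om_j$ is a topological manifold with boundary and $\om_j$ is regular, invariance of domain forces the manifold boundary of $\overline\om_j$ to coincide with the topological boundary $\partial\om_j$ (any chart-neighbourhood in $\overline\om_j$ homeomorphic to $\R^n$ would be open in $\R^n$, hence inside ${\rm int}\,\overline\om_j=\om_j$); consequently each $\partial\om_j$ is an $(n-1)$-dimensional topological manifold without boundary, in particular a Baire space.

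The second part is a Baire category reduction. Pick a grain $\om_a$ with $x_0\in\partial\om_a$. Then $\partial\om_a\cap B(x_0,\ep)$ is nonempty, open in the manifold $\partial\om_a$, and contained in $T$, hence covered by the finitely many sets $\partial\om_a\cap\partial\om_b\cap\partial\om_c$ with $b<c$ and $b,c\neq a$ (each point of $\partial\om_a\cap T$ lies on $\partial\om_a$ and on two further grain boundaries). These are closed in $\partial\om_a$, so by Baire one of them, for some pair $(b,c)$, has nonempty interior $V$ in $\partial\om_a$; thus $\emptyset\neq V\subset\partial\om_a$ is open in $\partial\om_a$, $V\subset\partial\om_b\cap\partial\om_c$, and $V\subset B(x_0,\ep)$. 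Now $V$ is open in the $(n-1)$-manifold $\partial\om_a$ and sits inside the equidimensional manifold $\partial\om_b$, so by invariance of domain $V$ is open in $\partial\om_b$ too, and likewise in $\partial\om_c$. Fixing $p\in V$ and intersecting the three defining open sets, one gets a ball $N$ around $p$ in $\R^n$ with $\partial\om_a\cap N=\partial\om_b\cap N=\partial\om_c\cap N=:S$.

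The heart of the proof, and the step I expect to be the main obstacle, is to show this local coincidence of three grain boundaries is impossible. Here I would use the local normal form for a topological manifold with boundary, i.e.\ the collar neighbourhood theorem of Brown and Connelly: near $p$ there is a neighbourhood $A_1$ of $p$ in $\overline\om_a$ homeomorphic to $\Theta\times[0,\infty)$ with $\Theta\cong\R^{n-1}$, carrying $A_1\cap\partial\om_a$ onto $\Theta\times\{0\}$ and $A_1\cap\om_a$ onto $\Theta\times(0,\infty)$, and similarly a neighbourhood $A_2$ of $p$ in $\overline\om_b$ modelled on $\Theta'\times[0,\infty)$. Since $\overline\om_a\cap\overline\om_b=\partial\om_a\cap\partial\om_b$ (disjoint open grains cannot meet each other's closures) and these boundaries agree near $p$, one can shrink the charts so that $A_1\cap A_2$ is exactly the common flat face; pasting the two half-space charts along that face (with the pasting lemma checking continuity of the glued map and its inverse, the two charts being closed in the union) presents $M:=A_1\cup A_2$ as homeomorphic to $\Theta\times\R\cong\R^n$, an $n$-manifold without boundary lying inside $\R^n$. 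A further application of invariance of domain makes $M$ open in $\R^n$, hence a neighbourhood of $p$. But $M\subset\overline\om_a\cup\overline\om_b$, which is disjoint from $\om_c$ (an open grain disjoint from $\om_a$ and $\om_b$ misses both their closures), contradicting $p\in\partial\om_c$. This contradiction proves the theorem.

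The routine points I am deliberately leaving implicit are: the equivalence of manifold and topological boundary sketched above; that an $(n-1)$-manifold is Baire (it is locally compact Hausdorff); the bookkeeping of nested neighbourhoods that makes ``the three boundaries coincide'' and ``the two charts overlap only along the flat face'' literally true; and the pasting-lemma verification of the gluing. The one genuinely nontrivial ingredient is the existence of the local collar for topological — as opposed to smooth or PL — manifolds with boundary; granting that classical fact, everything else reduces to point-set topology and repeated use of invariance of domain.
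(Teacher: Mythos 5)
Your argument is sound, but note first that this paper does not prove Theorem \ref{triplepointsnd} at all: it is quoted from \cite{u5}, so there is no in-text proof to measure you against, and your proposal has to be judged on its own terms. On those terms it works. The closedness of $T$, the identification (via invariance of domain and regularity $\om_j={\rm int}\,\overline\om_j$) of the manifold boundary of $\overline\om_j$ with the topological boundary $\partial\om_j$, and the Baire-category step on the locally compact $(n-1)$-manifold $\partial\om_a$ are all correct, as is the use of invariance of domain to upgrade ``$V$ open in $\partial\om_a$ and contained in $\partial\om_b$'' to ``$V$ open in $\partial\om_b$''. The gluing step, which you rightly flag as the crux, does go through: once the faces of the two chart neighbourhoods are made equal, the inclusion $A_1\cap A_2\subset\overline\om_a\cap\overline\om_b\cap N=\partial\om_a\cap\partial\om_b\cap N=S$ together with $A_i\cap S=F$ forces $A_1\cap A_2=F$ exactly, and your pasting-lemma verification (the two pieces are closed in the union because each complement is $A_i\cap\om_{\,\cdot}$, open in $\R^n$) is correct; $M$ is then an $n$-manifold without boundary sitting in $\R^n$, hence open by invariance of domain, and its disjointness from $\om_c$ contradicts $p\in\partial\om_c$.

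Two refinements worth recording. First, the ``one genuinely nontrivial ingredient'' you invoke, the Brown--Connelly collar theorem, is not needed: the local product structure $\Theta\times[0,\delta)$ near $p$ already follows from the defining chart into $\R^n_+$ (shrink to a box $C\times[0,\delta)$), with invariance of domain guaranteeing that points of $\om_a$ go to the open part and points of $\partial\om_a$ to the face; the collar theorem is a global statement and overkill here. Second, the face-matching bookkeeping is easiest if you do not insist on $\Theta\cong\R^{n-1}$: having chosen charts with faces $F_1,F_2$ (relatively open neighbourhoods of $p$ in $S$), set $F=F_1\cap F_2$ and shrink each chart horizontally to the cylinder over $F$; the glued set is then homeomorphic to $F\times(-1,1)$, which is still an $n$-manifold without boundary, and that is all the invariance-of-domain step requires (alternatively, shrink once more to a Euclidean neighbourhood of $p$ inside $F$ if you want $M\cong\R^n$ literally). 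With these points made explicit your outline is a complete proof.
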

\noindent 
If $n=3$, then under the hypotheses of Theorem \ref{triplepointsnd} the set $T$ can have infinite length (technically, its one-dimensional Hausdorff measure can be infinite). One can conjecture that this is impossible if the grains $\om_j$ have more regular, for example Lipschitz, boundaries.
\section{Microstructure of polycrystals}
\label{micro}
In this section we derive some results concerning martensitic microstructure in polycrystals using the framework of the nonlinear elasticity model for martensitic phase transformations (see \cite{j32,j40}), in which at a constant temperature the total elastic free energy is assumed to have  the form 
\begin{equation}
\label{total}
I(y)=\int_\om W(x,\nabla y(x))\,dx,
\end{equation}
where $y:\om\to\R^3$ is the deformation, and $\om\subset\R^3$ has the form \eqref{polyc}, where we make the very mild additional assumption that the boundary $\partial\om_j$ of each grain has zero 3D measure (volume).   Denoting $M^{3\times 3}=\{\mbox{real }3\times 3\mbox{ matrices}\}$, $M^{3\times 3}_+=\{A\in M^{3\times 3}: \det A>0\}$ and $SO(3)=\{R\in M^{3\times 3}_+: R^TR=\1\}$, we suppose that the free-energy density $W$ is given by $W(x, A)=\psi(AR_j)$ for $x\in\om_j$, where $R_j\in SO(3)$ and $\psi$ is the free-energy density corresponding to a single crystal.  
We assume that $\psi:M^{3\times 3}_+\to [0,\infty)$ is continuous, frame-indifferent, that is
 \begin{equation}\label{frame-indifferent}
\psi(QA)=\psi(A)\mbox{  for all } A\in M^{3\times 3}_+, Q\in SO(3),
\end{equation}
 and has a symmetry group $\mathcal S$, a subgroup of $SO(3)$, so that
\begin{equation}
\label{cubic}
\psi(AR)=\psi(A)\mbox{  for all } A\in M^{3\times 3}_+, R\in {\mathcal S}.
\end{equation}
 For the case of cubic symmetry ${\mathcal S}= P^{24}$, the group of rotations of a cube into itself. We assume that we are working at a temperature at which the free energy of the martensite  (which we take to be zero) is less than that of the austenite, so that $K=\{A\in M^{3\times 3}_+: \psi(A)=0\}$ is given by 
\begin{equation}\label{energywells}
K=\bigcup_{i=1}^M SO(3) U_i,
\end{equation}
where the $U_i$ are positive definite symmetric matrices representing the different variants of martensite, so that the $U_i$ are the distinct matrices $RU_1R^T$ for $R\in P^{24}$.  

Zero-energy microstructures are represented by {\it gradient Young measures} $(\nu_x)_{x\in\om}$ satisfying $\supp\nu_x\subset KR_j^T$ for $x\in \om_j$. For each $x\in\om$, $\nu_x$ is a probability measure on $M^{3\times 3}$ that describes the asymptotic distribution of the deformation gradients $\nabla y^{(j)}$ of a minimizing sequence $y^{(j)}$ for $I$ (i.e. such that $I(y^{(j)})\to 0$) in a vanishingly small ball centred at $x$. Here the {\it support} $\supp \nu_x$ of $\nu_x$ is defined to be the smallest closed subset $E\subset M^{3\times 3}$ whose complement $E^c$ has zero measure, i.e. $\nu_x(E^c)=0$; intuitively $\supp \nu_x$  can be thought of as the limiting set of gradients at $x$. Thus the condition that $\supp\nu_x\subset KR_j^T$ for $x\in \om_j$ is equivalent to $\int_\om\int_{M^{3\times 3}}W(x,A)\,d\nu_x(A)\,dx=\sum_{j=1}^N\int_{\om_j}\int_{M^{3\times 3}}\psi(AR_j)\,d\nu_x(A)\,dx=0$ and expresses that the microstructure has zero energy. The corresponding macroscopic deformation gradient is given by $\nabla y(x)=\bar\nu_x=\int_{M^{3\times 3}}A\,d\nu_x(A)$. We note the {\it minors relations} 
\begin{eqnarray}\label{minors} \det \bar\nu_x=\langle \nu_x,\det\rangle=\int_{M^{3\times 3}}\det A\,d\nu_x(A),\\ \cof \bar\nu_x=\langle \nu_x,\cof\rangle=\int_{M^{3\times 3}}\cof A\, d\nu_x(A),\label{minors1}
\end{eqnarray}
where $\cof A$ denotes the matrix of cofactors of $A$. Note that \eqref{minors} implies that $\det\bar\nu_x=\det U_1$ for any zero-energy microstructure.
(See \cite{j56} for a description of gradient Young measures in the context of the nonlinear elasticity model for martensite.)

 In the case of cubic symmetry, and the absence of boundary conditions on $\partial\om$, there always exist such zero-energy microstructures. Indeed by the self-accommodation result of Bhattacharya \cite{bhattacharya92} for cubic austenite there exists a homogeneous gradient Young measure $\nu$ with $\supp \nu\subset K$ and $\bar\nu=(\det U_1)^\frac{1}{3}\1$. We can then define  for $x\in\om_j$ the measure  $\nu_x(E)=\nu(R_j^TER_j)$ of a subset $E\subset M^{3\times 3}$ of matrices. Then, since $R_j^TM^{3\times 3}R_j=M^{3\times 3}$ we have that $\bar\nu_x=\int_{M^{3\times 3}}A\,d\nu_x(A)=R_j\int_{M^{3\times 3}}B\,d\nu(B)R^T_j=(\det U_1)^\frac{1}{3}\1$ for $x\in\om_j$. By a result of Kinderlehrer \& Pedregal \cite{kinderlehrerpedregal91} it follows that $(\nu_x)_{x\in\om}$ is a gradient Young measure, and since $\int_{M^{3\times 3}}\psi(AR_j)\,d\nu_x(A)=\int_{M^{3\times 3}}\psi(R_jB)\,d\nu(B)=0$ it follows that $(\nu_x)_{x\in\om}$ is a zero-energy microstructure.
\subsection{Higher-order laminates for cubic-to-tetragonal transformations}
\label{fourgradients}
In this subsection we consider a cubic-to-tetragonal transformation, for which $K$ is given by \eqref{energywells} with $M=3$ and 
$U_1=\diag(\eta_2,\eta_1,\eta_1)$, $U_2=\diag(\eta_1,\eta_2,\eta_1)$, $U_3=\diag(\eta_1,\eta_1,\eta_2)$, where $\eta_1>0$, $\eta_2>0$ and $\eta_1\neq\eta_2$. Motivated by the observation above that a zero-energy microstructure with uniform macroscopic deformation gradient $(\det U_1)^\frac{1}{3}\1=\eta_1^\frac{2}{3}\eta_2^\frac{1}{3}\1$ exists for any polycrystal, we discuss whether this can be achieved with a microstructure that in each grain involves just $k$ gradients, where $k$ is small. Without loss of generality we can consider a single unrotated grain, so that the question reduces to whether there exists a homogeneous gradient Young measure $\nu$ having the form 
\begin{equation}\label{finite}
\nu=\sum_{i=1}^k\lambda_i\delta_{A_i} \mbox{ with }\lambda_i\geq 0, \sum_{j=1}^k\lambda_j=1, \mbox{ and } A_i\in K,
\end{equation}
and with macroscopic deformation gradient $\bar\nu = \eta_1^\frac{2}{3}\eta_2^\frac{1}{3}\1$. In \eqref{finite} we have used the notation $\delta_A$ for the Dirac mass at $A\in M^{3\times 3}$, namely the measure defined by 
$$\delta_A(E)=\left\{\begin{array}{ll}1& \mbox{if } A\in E,\\ 0&\mbox{if }A\not\in E.\end{array}\right.$$
 The following result implies in particular that this is impossible unless $k>4$, so that \eqref{finite} cannot be satisfied for a double laminate, a result also obtained by Muehlemann \cite{muehlemann15}.
\begin{thm}
\label{4gradthm}
There is no  homogeneous gradient Young measure $\nu$ with $\supp\nu\subset K=\cup_{i=1}^3SO(3)U_i$ and satisfying
  $\bar\nu=\eta_1^\frac{2}{3}\eta_2^\frac{1}{3}\1$, such that $\supp \nu \cap(SO(3)U_j\cup SO(3)U_k)$ contains at most two matrices for some distinct pair $j,k\in\{1,2,3\}$.
\end{thm}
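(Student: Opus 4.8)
The plan is to argue by contradiction and extract rigidity from the minors relations \eqref{minors}–\eqref{minors1} together with the structure of the three tetragonal wells. Suppose $\nu=\sum_{i=1}^k\lambda_i\delta_{A_i}$ (a finitely supported homogeneous gradient Young measure may be assumed after passing to the barycentre of the support, since the hypothesis only constrains the support) has $\bar\nu=\eta_1^{2/3}\eta_2^{1/3}\1$ and $\supp\nu\cap(SO(3)U_j\cup SO(3)U_k)$ contains at most two matrices, say for the pair $j,k=\{2,3\}$ after relabelling. Write $\mu=\sum_{A_i\in SO(3)U_1}\lambda_i$ for the total mass sitting on the first well and $\theta_2,\theta_3$ for the masses on wells $2,3$, so $\mu+\theta_2+\theta_3=1$. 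Taking traces in \eqref{minors1} and using $\cof(RU_i)=R\,\cof U_i$ for $R\in SO(3)$, together with $\det \bar\nu=\det U_1=\eta_1^2\eta_2$ from \eqref{minors}, one gets linear/affine identities relating $\mu,\theta_2,\theta_3$ and the rotation parts; the key point is that each well contributes a \emph{fixed} value $\det U_i=\eta_1^2\eta_2$ to \eqref{minors}, so that relation gives no information, whereas \eqref{minors1} and the barycentre condition do.

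The heart of the argument is to show that at most two matrices on wells $2$ and $3$ combined forces essentially all the mass onto well $1$, and that a homogeneous gradient Young measure supported on a single well $SO(3)U_1$ with $\bar\nu=\eta_1^{2/3}\eta_2^{1/3}\1$ cannot exist. For the latter: if $\supp\nu\subset SO(3)U_1$ then by \eqref{minors1} $\cof\bar\nu=\langle\nu,\cof\rangle$ lies in the convex hull of $SO(3)\,\cof U_1$, and $|\cof U_1\,n|=|U_1|^{-1}\det U_1$ is constant over unit vectors $n$ only in degenerate cases; more directly, combining the constraint $\det\bar\nu=\eta_1^2\eta_2$, $\cof\bar\nu=\eta_1^{4/3}\eta_2^{2/3}\1$ and $\bar\nu=\eta_1^{2/3}\eta_2^{1/3}\1$ with $\langle\nu,\cof\rangle=\cof\bar\nu$ gives $\langle\nu,\cof(RU_1)\rangle=\eta_1^{4/3}\eta_2^{2/3}\1$, and since $\cof(RU_1)=R\cof U_1$ with $\cof U_1=\diag(\eta_1^2,\eta_1\eta_2,\eta_1\eta_2)$ this is a barycentre of a subset of $SO(3)\,\diag(\eta_1^2,\eta_1\eta_2,\eta_1\eta_2)$ equal to a multiple of the identity; one checks this is incompatible with also having $\langle\nu,\mathrm{id}\rangle$ a multiple of the identity and $\eta_1\neq\eta_2$, because the two largest singular values of $\bar\nu$ and of $\cof\bar\nu$ cannot simultaneously be the ``repeated'' ones for a single well. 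Thus some mass must sit on well $2$ or well $3$.

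Next, with at most two matrices available on wells $2$ and $3$ together, I analyse the possible rank-one connections: any homogeneous gradient Young measure with finite support has its barycentre expressible by successive rank-one splittings (lamination), and the two ``off'' matrices can each be rank-one connected to $SO(3)U_1$ in only the classical twinning ways. Tracking the normals of these rank-one connections and the two-variant averaging, I intend to show the barycentre is forced to have a distinguished axis inherited from the $U_2$–$U_3$ pair, contradicting $\bar\nu=\eta_1^{2/3}\eta_2^{1/3}\1$ which is isotropic. Concretely I would compute, for the at most two gradients $B_1\in SO(3)U_2$, $B_2\in SO(3)U_3$ (or both on the same well), the quantity $\langle\nu,\cof\rangle-\mu\,\cof(\text{well-1 part})$ and show its anisotropy cannot be cancelled by the well-1 contribution because the latter's cofactor barycentre is constrained by the single-well analysis above.

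\textbf{Main obstacle.} The delicate step is the combinatorial/geometric bookkeeping of the rank-one (lamination) structure with only two matrices on wells $2$ and $3$: one must rule out \emph{all} ways those two matrices, together with an arbitrary (possibly large) number on well $1$, could average to the isotropic barycentre, and the minors relations alone may not immediately close this — one likely needs the finer fact that the cofactor barycentre must \emph{also} be achieved as a genuine average, i.e. to use both \eqref{minors1} \emph{and} a second-order minors-type identity (for $3\times3$ matrices, $\det$ and $\cof$ exhaust the minors, so the real work is extracting anisotropy from the interplay of $\bar\nu$ and $\cof\bar\nu$ on the restricted support). I expect the cleanest route is to show that the projection of $\supp\nu$ onto wells $2,3$ having $\le 2$ points implies the $2$-$2$ or $1$-$3$ "rank-one graph" on $\supp\nu$ is too sparse to reach an isotropic barycentre, invoking the explicit list of tetragonal twinning normals.
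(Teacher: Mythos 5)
Your proposal is a plan, not a proof, and its two load-bearing structural claims are each problematic. First, the claim that ``at most two matrices on wells $2$ and $3$ combined forces essentially all the mass onto well $1$'' is false: the hypothesis limits the \emph{number of support points} on $SO(3)U_2\cup SO(3)U_3$, not their weights, which a priori are arbitrary (and in the critical configurations the weights on wells $2$ and $3$ are not small at all --- in the paper's analysis they come out equal to $c(k)=(1+k^{2/3}+k^{-2/3})^{-1}$, close to $1/3$). Consequently your reduction to the single-well rigidity statement does not address the main case. Second, the step ``any homogeneous gradient Young measure with finite support has its barycentre expressible by successive rank-one splittings (lamination)'' is not a theorem you may invoke: gradient Young measures are in general not laminates, and no such decomposition is available; so the programme of tracking twinning normals through a lamination tree never gets started. (Your preliminary reduction to finite support by ``passing to the barycentre of the support'' is also not a legitimate operation --- the part of $\nu$ on well $1$ can be a genuinely diffuse measure on $SO(3)U_1$, and replacing it by its barycentre leaves $K$ --- but this is harmless, since the argument can and should be run with a general probability measure $\mu$ on $SO(3)U_1$, encoded by its mean rotation $H=\bar\mu^*$.) You yourself flag the decisive combinatorial step as an unresolved ``main obstacle,'' so the contradiction is never reached.

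For comparison, the paper closes the argument with exactly the tools you name but used differently. The case $\supp\nu$ contained in a union of two wells (which covers, e.g., two support points on well $2$ and none on well $3$) is killed by the known characterization of the quasiconvex hull of $SO(3)U_1\cup SO(3)U_2$, which forces $\bar\nu^T\bar\nu e_3=\eta_1^2e_3$ and hence $\eta_1=\eta_2$. In the remaining case $\nu=\lambda_1\mu+\lambda_2\delta_{R_2U_2}+\lambda_3\delta_{R_3U_3}$ as in \eqref{nuform}, one writes the barycentre identity \eqref{mubar} and the cofactor minors relation \eqref{cof}, subtracts and divides by $k-k^{-1}$ to obtain the rigid identity \eqref{ck},
\begin{equation*}
c(k)\1=\lambda_1He_1\otimes e_1+\lambda_2R_2e_2\otimes e_2+\lambda_3R_3e_3\otimes e_3,
\end{equation*}
which pins down $\lambda_2=\lambda_3=c(k)$ because $R_2,R_3$ are rotations; acting \eqref{mubar} on $e_1$ then yields $k^{1/3}\leq c(k)(k+2)$, i.e.\ $(k^{1/3}-k^{-1/3})^2\leq 0$, so $k=1$, a contradiction. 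No lamination structure, twinning normals, or smallness of mass is needed --- the minors relation \eqref{minors1}, which you correctly sensed was central, suffices once combined with \eqref{mubar} through this subtraction trick.
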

\begin{proof} Suppose first that $\supp \nu$ is contained in the union of two of the wells, say $\supp\nu\subset SO(3)U_1\cup SO(3)U_2$. Then by the characterization of the quasiconvex hull of $SO(3)U_1\cup SO(3)U_2$  in \cite{j40} we have that $\bar\nu^T\bar\nu e_3=\eta_1^\frac{4}{3}\eta_2^\frac{2}{3}e_3=\eta_1^2 e_3$. Hence $\eta_1=\eta_2$, a contradiction. 
Without loss of generality we can therefore suppose that
\begin{equation}
\label{nuform}
\nu=\lambda_1 \mu +\lambda_2\delta_{R_2U_2}+\lambda_3\delta_{R_3U_3}
\end{equation}
 where $\lambda_i\geq 0$, $\sum_{i=1}^3\lambda_i=1$, $R_2, R_3\in SO(3)$ and $\mu$ is a probability measure on $SO(3)U_1$. Define $\mu^*(E)=\mu(EU_1)$ for $E\subset M^{3\times 3}$. Then $\mu^*$ is a probability measure with $\supp \mu^*\subset SO(3)$. Let $H=\bar\mu^*$. Then $\bar\mu=\int_{SO(3)U_1}A\,d\mu(A)=\int_{SO(3)}RU_1\,d\mu^*(R)=HU_1$. Letting $k=\eta_2/\eta_1$ and calculating $\bar\nu$ from \eqref{nuform}, we deduce that 
\begin{eqnarray}
\nonumber
\hspace{-.2in}k^\frac{1}{3}\1=\lambda_1 H \diag (k,1,1)+\lambda_2R_2\diag(1,k,1)&&\\
&&\hspace{-.95in}+\lambda_3R_3\diag(1,1,k).\label{mubar}
\end{eqnarray}
We now apply the minors relation \eqref{minors1} to $\nu$. Noting that 
\begin{eqnarray*}\langle\mu,\cof\rangle&=&\int_{SO(3)U_1}\cof A\,d\mu(A)\\&=&\int_{SO(3)}\cof(RU_1)\,d\mu^*(R)\\&=&\int_{SO(3)}R\,\cof(U_1)\,d\mu^*(R)\\&=&H\,\cof U_1,
\end{eqnarray*}
 we obtain
\begin{eqnarray}\nonumber\hspace{-.2in}
k^{-\frac{1}{3}}\1=\lambda_1 H \diag (k^{-1},1,1)+\lambda_2R_2\diag(1,k^{-1},1)&&\\
&&\hspace{-1.35in}+\lambda_3R_3\diag(1,1,k^{-1}).\label{cof}
\end{eqnarray}
Subtracting \eqref{cof} from \eqref{mubar} and dividing by $k-k^{-1}$ we deduce that
\begin{equation}\label{ck}
c(k)\1=\lambda_1He_1\otimes e_1+\lambda_2R_2e_2\otimes e_2+\lambda_3 R_3e_3\otimes e_3,
\end{equation}
where $c(k)=\frac{k^\frac{1}{3}-k^{-\frac{1}{3}}}{k-k^{-1}}=(1+k^\frac{2}{3}+k^{-\frac{2}{3}})^{-1}>0$, from which it follows that 
$$\lambda_1He_1=c(k)e_1, \; \lambda_2R_2e_2=c(k)e_2, \;\lambda_3R_3=c(k)e_3.$$
Hence $\lambda_2=\lambda_3=c(k)$. Acting \eqref{mubar} on $e_1$ we have that 
$$k^\frac{1}{3}e_1=c(k)(ke_1+R_2e_1+R_3e_1). $$ Hence $k^\frac{1}{3}\leq c(k)(k+2)$, from which we obtain $(k^\frac{1}{3}-k^{-\frac{1}{3}})^2\leq 0$, so that $k=1$, a contradiction.
\end{proof}
\noindent The conclusion of Theorem \ref{4gradthm} contrasts with observations of polycrystalline materials undergoing cubic-to-tetragonal phase transformations, but for which some grains are completely filled by a single double laminate. Such cases arise for the ceramic ${\rm BaTiO}_3$
 \cite{Arlt90}
 and in various RuNb and RuTa shape-memory alloys
 \cite{manzonietal09,
manzoni11,
vermautetal13,manzonietal14}.
 Arlt \cite{Arlt90}
 gives an interesting qualitative discussion of energetically preferred grain microstructure, drawing a distinction between the microstructures in interior and boundary grains. Following his reasoning, a likely explanation for why double, and not higher-order, laminates are observed in interior grains in these materials is that it is energetically better to form a double laminate with gradients   away from the energy wells, than to form a higher-order laminate having gradients extremely close to the energy wells. According to this explanation, the extra interfacial energy (ignored in the nonlinear elasticity model) involved in forming a higher-order laminate would exceed the total bulk plus interfacial energy for the double laminate. Of course once the gradients are allowed to move away from the wells the conclusion of Theorem \ref{4gradthm} will not hold. Additional factors could include cooperative deformation of different grains (so that the assumption of a pure dilatation on the boundary is not a good approximation), some of which may have more complicated microstructures than a  double laminate. It is interesting that third-order laminates are observed for RuNb alloys undergoing cubic-to-monoclinic transformations \cite{vermautetal13}.
\subsection{Bicrystals with two martensitic energy wells}
\label{bicrystals}
We now consider restrictions on possible zero-energy microstructures in polycrystals without making any assumptions other than those given by the grain geometry and texture. In particular, unlike in Section \ref{fourgradients}, we make no assumptions on the macroscopic deformation gradient of the microstructure. The restrictions result only from continuity of the deformation across grain boundaries. In order to give precise results, we restrict attention to a {\it bicrystal}, that is a polycrystal with just two grains $\om_1$ and $\om_2$. We assume that the grains have the cylindrical form
$\om_1=\omega_1\times (0,d),\; \om_2=\omega_2\times (0,d)$, where $d>0$, and $\omega_1, \omega_2\subset\R^2$ are bounded   domains. We assume for simplicity that the boundaries $\partial\omega_1, \partial\omega_2$ are smooth and intersect nontrivially, so that   $\partial\omega_1\cap\partial\omega_2$ contains points in the interior $\omega$ of $\overline\omega_1\cup\overline\omega_2$. The  interface between the grains $\partial\Omega_1\cap\partial\om_2=(\partial\omega_1\cap\partial\omega_2)\times (0,d)$ thus contains points in a neighbourhood of which $\om_1$ and $\om_2$ are separated by a smooth surface having normal $n(\theta)=(\cos\theta,\sin\theta,0)$  in the $(x_1,x_2)$ plane. 

We consider a martensitic transformation with two energy wells  (for example, orthorhombic-to-monoclinic)  for which   $M=2$ and $K=SO(3)U_1\cup SO(3)U_2$, where $U_1=\diag (\eta_2,\eta_1,\eta_3)$, $U_2=\diag(\eta_1,\eta_2,\eta_3)$, where $\eta_1>0$, $\eta_2>0$, $\eta_1\neq\eta_2$  and $\eta_3>0$. We further suppose that $\om_1$ has cubic axes in the coordinate directions $e_1, e_2, e_3$, while in $\om_2$ the cubic axes are rotated through an angle $\alpha$ about $e_3$. Thus a zero-energy microstructure corresponds to a gradient Young measure $(\nu_x)_{x\in\om}$ such that 
$$\supp \nu_x\subset K \mbox{ for }x\in\om_1,\;\;\supp\nu_x\subset KR(\alpha) \mbox{ for }x\in\om_2,$$
where $R(\alpha)=\left(\begin{array}{lll} \cos\alpha&-\sin\alpha&0\\ \sin\alpha&\cos\alpha& 0\\
0&0&1\end{array}\right)$.  It can be shown that $KR(\alpha)=K$ if and only if $\alpha=\frac{r\pi}{2}$ for some integer $r$. Hence we assume that $\alpha\neq\frac{r\pi}{2}$.

We ask whether it is possible for there to be a zero-energy microstructure {\it which is a pure variant in one of the grains}, i.e. either for $i=1$ or $i=2$,   $\nu_x=\delta_{Q(x)U_j}$ for $x\in\om_i$ and some $j$, where $Q(x)\in SO(3)$.   Since $\om_i$  is connected,  a standard result  \cite{reshetnyak67} shows that $\nu_x=\delta_{Q(x)U_j}$ for $x\in\om_i$ implies that $Q(x)$ is smooth and hence \cite{shield73} is a constant rotation, so that $\nabla y(x)$ is constant in $\om_i$.  
\begin{thm}[\cite{u5}]
\label{planar}
Suppose that the  interface between the grains  is planar, i.e. $\partial\om_1\cap\partial\om_2\subset \Pi(N)$ where $\Pi(N)=\{x\in\R^3:x\cdot N= a\}$ for some unit vector $N=(N_1,N_2,0)$ and constant $a$. Then there exists  a zero-energy microstructure which is a pure variant in one of the grains. 
\end{thm}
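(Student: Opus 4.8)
The plan is to reduce to a planar two-well problem, use the left $SO(2)$-invariance of the wells to strip off the directional part of the Hadamard jump, and then build the microstructure by gluing a constant gradient in one grain to a simple laminate in the other across the planar interface. First I would restrict to deformations $y(x)=(\hat y(x_1,x_2),\eta_3x_3)$, so that $\nabla y=\diag(\nabla\hat y,\eta_3)$ with $\nabla\hat y\in M^{2\times2}$. Since $U_j$ and $U_jR(\alpha)$ carry $\eta_3$ in the $(3,3)$ entry, one checks that $\nabla y\in SO(3)U_j\iff\nabla\hat y\in SO(2)\hat U_j$ and $\nabla y\in SO(3)U_jR(\alpha)\iff\nabla\hat y\in SO(2)\hat U_j\hat R_\alpha$, where $\hat U_1=\diag(\eta_2,\eta_1)$, $\hat U_2=\diag(\eta_1,\eta_2)$ and $\hat R_\alpha$ is the planar rotation by $\alpha$. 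As $N$ has no $e_3$-component, a planar $\hat y$ continuous across $\partial\omega_1\cap\partial\omega_2$ lifts to a $y$ continuous across $(\partial\omega_1\cap\partial\omega_2)\times(0,d)$, a planar gradient Young measure with support in the planar wells lifts to a gradient Young measure on $\om$ with support in the corresponding three-dimensional wells, and a Dirac lifts to a Dirac. So it suffices to produce in $\R^2$ a zero-energy microstructure with wells $\hat K:=SO(2)\hat U_1\cup SO(2)\hat U_2$ on $\omega_1$ and $\hat K\hat R_\alpha$ on $\omega_2$ that is a pure variant on one grain.

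Next I would set up the matching. If we try a pure variant $\hat F=\hat Q\hat U_j$, $\hat Q\in SO(2)$, on $\omega_1$, and on $\omega_2$ a simple laminate supported in $\hat K\hat R_\alpha$ with macroscopic gradient $\hat G$, then --- because $x\cdot N\equiv a$ on the interface --- the glued planar map is continuous with $\nabla\hat y^{(k)}\equiv\hat F$ on $\omega_1$ precisely when $\hat F-\hat G=\hat b\otimes N$, i.e. $\hat F N^\perp=\hat G N^\perp$ (the additive constant being then adjusted). As $SO(2)$ is a group, $\hat S\hat K=\hat K$ and $\hat S(\hat K\hat R_\alpha)=\hat K\hat R_\alpha$ for every $\hat S\in SO(2)$, so we may rotate the pure variant and the laminate independently, and matching $\hat F N^\perp=\hat G N^\perp$ as vectors then reduces to matching their lengths. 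The wells $SO(2)\hat U_1,\,SO(2)\hat U_2$ are twin-related, hence rank-one connected via some $\hat R^*\in SO(2)$; along $\hat W(\theta)=\theta\hat R^*\hat U_1+(1-\theta)\hat U_2$, $\theta\in[0,1]$, the matrix $\hat W(\theta)\hat R_\alpha$ is the barycenter of a first-order laminate in $\hat K\hat R_\alpha$, while $\theta\mapsto|\hat W(\theta)\hat R_\alpha N^\perp|$ runs continuously from $|\hat U_2\hat R_\alpha N^\perp|$ to $|\hat U_1\hat R_\alpha N^\perp|$, attaining every value in between.

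The crux is the following dichotomy. Writing $\sigma=\eta_1^2+\eta_2^2$, for any unit vector $v$ one has $|\hat U_1v|^2=\tfrac\sigma2+\delta(v)$ and $|\hat U_2v|^2=\tfrac\sigma2-\delta(v)$ with $\delta(v)=\tfrac{\eta_2^2-\eta_1^2}{2}(v_1^2-v_2^2)$, so both the set of target squared-lengths $\{\tfrac\sigma2\pm\delta(N^\perp)\}$ and the interval of attainable squared-lengths $[\tfrac\sigma2-|\delta(\hat R_\alpha N^\perp)|,\tfrac\sigma2+|\delta(\hat R_\alpha N^\perp)|]$ are symmetric about $\tfrac\sigma2$. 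Hence a pure variant on $\omega_1$ matches some grain-2 laminate exactly when $|\delta(N^\perp)|\le|\delta(\hat R_\alpha N^\perp)|$, and symmetrically a pure variant on $\omega_2$ matches a grain-1 laminate exactly when $|\delta(\hat R_\alpha N^\perp)|\le|\delta(N^\perp)|$; one of these always holds, so I choose the grain accordingly (this uses $\eta_1\neq\eta_2$ but not $\alpha\neq r\pi/2$). Having chosen, put the constant gradient (lifted) in that grain; in the other grain invoke Kinderlehrer \& Pedregal \cite{kinderlehrerpedregal91} to obtain a sequence generating the selected first-order twin Young measure with affine boundary values equal to its barycenter times $x$ on the whole grain boundary; fix the two free rotations so the $N^\perp$-columns coincide and the additive constant so the pieces agree on the planar interface; lift back to $\R^3$. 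The resulting $y^{(k)}$ is Lipschitz and continuous on $\om$, has $I(y^{(k)})\to0$, and generates a gradient Young measure that is a Dirac mass at a single energy-well matrix throughout the chosen grain, i.e. a pure variant there.

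I expect the main obstacle to be spotting the right invariant to compare, $|\delta(N^\perp)|$ versus $|\delta(\hat R_\alpha N^\perp)|$, and recognising that the freedom to place the pure variant in \emph{either} grain is exactly what makes the length-matching always solvable. The supporting points --- legitimacy of the planar reduction, using left $SO(2)$-invariance to remove the directional part of the jump, and attainability of affine boundary data for a first-order laminate --- are more routine, but should be set down with care.
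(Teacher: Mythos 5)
Your argument is correct, and it establishes the theorem by a leaner route than the machinery the paper points to (the paper itself does not reproduce the proof of Theorem \ref{planar}, which is quoted from \cite{u5}; it only lists the ingredients used there jointly for Theorems \ref{planar} and \ref{45}). Where the paper invokes the plane-strain rigidity result \cite{j39}, the characterization of the quasiconvexification of $K$ from \cite{j40}, a generalized Hadamard jump condition and ``long and detailed calculations'', you need only the easy direction of the planar reduction (lifting a 2D construction to 3D, which is immediate since $N_3=0$ and the block-diagonal lift lands in the 3D wells), explicit first-order laminates in place of the full quasiconvex hull, and the classical Hadamard condition used constructively rather than its generalization (the latter is really needed for the non-existence statement of Theorem \ref{45}, not here). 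The decisive points in your write-up check out: the two planar wells $SO(2)\hat U_1$, $SO(2)\hat U_2$ have equal determinant and are indeed rank-one connected (one computes $\det(\hat Q_\phi\hat U_1-\hat U_2)=2\eta_1\eta_2-(\eta_1^2+\eta_2^2)\cos\phi$, which vanishes for two angles since $\eta_1\neq\eta_2$); the left $SO(2)$-invariance of both well sets correctly reduces the vector equation $\hat F N^\perp=\hat G N^\perp$ to equality of lengths; the identities $|\hat U_1 v|^2=\tfrac\sigma2+\delta(v)$, $|\hat U_2 v|^2=\tfrac\sigma2-\delta(v)$ are right, so the laminate path covers, by the intermediate value theorem, the symmetric interval of half-width $|\delta|$ about $\tfrac\sigma2$, and the trichotomy of reals guarantees that one of the two placements of the pure variant always matches --- this dichotomy (choosing \emph{which} grain carries the pure variant by comparing $|\delta(N^\perp)|$ with $|\delta(\hat R_\alpha N^\perp)|$) is the idea that lets you dispense with the explicit quasiconvex-hull formulas. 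The gluing step is also sound: a homogeneous laminate can be generated by a uniformly Lipschitz sequence with affine boundary values equal to its barycentre, and since the interface lies in the plane $\Pi(N)$ and the two affine maps differ by $\hat b\otimes N$ plus a constant, the glued sequence is Lipschitz on $\om$, generates $\delta_F$ in the chosen grain and the laminate measure in the other, and has energy tending to zero. What the paper's route buys in exchange is more: the hull characterization and generalized jump condition describe \emph{all} macroscopic gradients compatible across the interface, which is what the companion non-existence result requires; your construction deliberately exhibits just one compatible pair, which is all Theorem \ref{planar} asks for.
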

\noindent Thus, in order to eliminate the possibility of a pure variant in one of the grains we need a curved  interface. To give an explicit result we consider the special case when $\alpha=\frac{\pi}{4}$. Let 
\begin{eqnarray*} 
&D_1=(\frac{\pi}{8},\frac{3\pi}{8})\cup (\frac{5\pi}{8},\frac{7\pi}{8})\cup(\frac{9\pi}{8},\frac{11\pi}{8})\cup(\frac{13\pi}{8},\frac{15\pi}{8}), \\   &D_2=(\frac{-\pi}{8},\frac{\pi}{8})\cup (\frac{3\pi}{8},\frac{5\pi}{8})\cup(\frac{7\pi}{8},\frac{9\pi}{8})\cup(\frac{11\pi}{8},\frac{13\pi}{8}). 
\end{eqnarray*}
\begin{thm}[\cite{u5}]
\label{45}
Let $\alpha=\frac{\pi}{4}$ and $\frac{\eta_2}{\eta_1}\leq\sqrt{1+\sqrt{2}}$. If $\partial\om_1\cap\partial\om_2$ has  points with normals $n(\theta)$  and $n(\theta')$ with $\theta\in D_1$ and $\theta'\in D_2$, then there is no zero-energy microstructure which is a pure variant in one of the grains.
\end{thm}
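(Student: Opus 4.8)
The plan is to argue by contradiction, reducing the compatibility condition at the grain boundary to a planar two-well problem. Suppose $(\nu_x)$ is a zero-energy microstructure that is a pure variant in one of the grains. The whole configuration is invariant under rotating the reference frame by $\frac{\pi}{4}$ about $e_3$: this exchanges the two grains (since $KR(\frac{\pi}{2})=K$) and maps the set of directions $D_1$ onto $D_2$, so it suffices to treat a pure variant in $\om_1$ and derive a contradiction from an interface point whose normal $n(\theta)$ has $\theta\in D_1$. By the rigidity quoted before the statement, $\nabla y=F$ is constant on $\om_1$ with $F=QU_j$ for some $Q\in SO(3)$ and $j\in\{1,2\}$. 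I would then apply the generalized Hadamard jump condition of \cite{u5} at such an interface point: since $\om_1$ carries the homogeneous gradient $F$, it yields the trace from the $\om_2$ side, namely that the macroscopic deformation gradient $\bar\nu_x$ in $\om_2$ has a trace on the interface equal to $F+a\otimes n(\theta)$ for some $a\in\R^3$, and, being the average of a gradient Young measure supported in $KR(\frac{\pi}{4})$, this trace lies in $\mathcal QR(\frac{\pi}{4})$, where $\mathcal Q:=QC(SO(3)U_1\cup SO(3)U_2)$.

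The next step is to reduce to two dimensions using the characterization of $\mathcal Q$ in \cite{j40}. Every $G\in\mathcal Q$ satisfies $G^TGe_3=\eta_3^2e_3$, and \cite{j40} reduces the membership $G\in\mathcal Q$ to a condition on the restriction of $G$ to the $e_1e_2$-plane lying in the planar two-well quasiconvex hull $\hat{\mathcal Q}:=QC\!\left(SO(2)\diag(\eta_2,\eta_1)\cup SO(2)\diag(\eta_1,\eta_2)\right)$. Applying this to $(F+a\otimes n(\theta))R(-\frac{\pi}{4})\in\mathcal Q$, and using $n(\theta)\cdot e_3=0$ and $R(\frac{\pi}{4})e_3=e_3$, first gives $a\cdot Fe_3=0$; then, left-multiplying by $Q^{-1}\in SO(3)$ (which preserves $KR(\frac{\pi}{4})$ and the normal) and noting that the resulting matrix is block-diagonal with third diagonal entry $\eta_3$, the condition collapses to the existence of $\hat a\in\R^2$ with
\[
\hat U_j\hat R(-\tfrac{\pi}{4})+\hat a\otimes\hat n(\theta+\tfrac{\pi}{4})\in\hat{\mathcal Q},
\]
where $\hat U_j\in\{\diag(\eta_2,\eta_1),\diag(\eta_1,\eta_2)\}$, $\hat R(\psi)$ is the planar rotation by $\psi$, and $\hat n(\psi)=(\cos\psi,\sin\psi)$.

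Finally I would analyse this planar condition explicitly. Since $\hat U_j\hat R(-\frac{\pi}{4})$ lies on the boundary sphere $\|\,\cdot\,\|=(\eta_1^2+\eta_2^2)^{1/2}$ but is not a well, $\hat a\ne0$ is forced, so there is a genuine constraint on $\theta$. Imposing $\det=\eta_1\eta_2$ along the rank-one line pins down the direction of $\hat a$; the bound $\|\,\cdot\,\|^2\le\eta_1^2+\eta_2^2$ (the squared Frobenius norm being convex, with the wells attaining equality) together with the explicit shape of $\hat{\mathcal Q}$ from \cite{j40} then shows, via the trigonometry that becomes tractable for $\alpha=\frac{\pi}{4}$, that $\hat U_j\hat R(-\frac{\pi}{4})$ is rank-one compatible with $\hat{\mathcal Q}$ only across normals $\hat n(\phi)$ with $\phi$ in an explicit union of open arcs; under $\eta_2/\eta_1\le\sqrt{1+\sqrt2}$ that union is contained in $D_1$, the value $\sqrt{1+\sqrt2}$ being precisely the ratio at which the arcs reach the endpoints $\frac{\pi}{8}+\frac{k\pi}{4}$. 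Translating back through $\phi=\theta+\frac{\pi}{4}$, the admissible $\theta$ all lie in $D_1-\frac{\pi}{4}=D_2$, which is disjoint from $D_1$; this contradicts $\theta\in D_1$, so a pure variant in $\om_1$ is impossible. The frame-rotation symmetry then rules out a pure variant in $\om_2$ as well, using the interface normal with $\theta'\in D_2$.

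I expect the main obstacle to be this last step — importing the \cite{j40} characterization of the two-well quasiconvex hull into the $2\times2$ reduction and pushing the trigonometry far enough to identify exactly which interface normals are compatible, and in particular to see that $\sqrt{1+\sqrt2}$ is the precise ratio beyond which compatible normals start to enter $D_1$. A secondary point requiring care is the justification of the generalized Hadamard jump condition on a merely smooth, curved interface, and the verification that the $e_3$-rigidity of $\mathcal Q$ is inherited by the interfacial trace.
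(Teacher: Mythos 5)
Your overall machinery is the same as the route the paper indicates for this theorem (the generalized Hadamard jump condition of \cite{u5}, reduction to the plane, the two-well hull characterization of \cite{j40}, then explicit computation), but your decisive directional claim is backwards, and this can be checked directly. Work in the plane and recall that membership of $G$ in the hull of grain 2's wells, i.e. $G\hat R(-\frac{\pi}{4})\in\hat{\mathcal Q}$, is equivalent to $\det G=\eta_1\eta_2$, $2|Ge_1|^2\leq\eta_1^2+\eta_2^2$ and $2|Ge_2|^2\leq\eta_1^2+\eta_2^2$. Now take a pure variant in $\om_1$ and an interface point with normal $n(0)=e_1$, i.e. $\theta=0\in D_2$: writing $G=\diag(\eta_1,\eta_2)+\hat a\otimes e_1$ (or with the other variant), the determinant constraint forces the first component of $\hat a$ to vanish, and then $2|Ge_1|^2\geq 2\max(\eta_1^2,\eta_2^2)>\eta_1^2+\eta_2^2$, so compatibility is impossible for any $\eta_1\neq\eta_2$. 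By contrast, for $\theta=\frac{\pi}{4}\in D_1$ (grain 1's own twin normal) compatibility is possible: e.g. with $\eta_1=1$, $\eta_2=\sqrt2$, the matrix $\diag(1,\sqrt2)+t(1,-\sqrt2)\otimes\hat n(\frac{\pi}{4})$ satisfies all three conditions for a range of $t$ (this is also why Theorem \ref{planar} can deliver a pure variant across a planar boundary whose normal lies in $D_1$). Hence the correct pairing is the opposite of yours: the normal with $\theta'\in D_2$ rules out a pure variant in $\om_1$, and the normal with $\theta\in D_1$ rules out a pure variant in $\om_2$. Your frame-rotation symmetry reduction is fine, but run as written ("pure variant in $\om_1$, contradiction from $\theta\in D_1$, admissible normals lie in $D_2$") the contradiction simply does not materialize; both hypotheses of the theorem are needed, one per grain, with the assignment reversed relative to your sketch.

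Beyond that, the core of the theorem is exactly the part you defer: determining the full arcs of normals compatible with a pure variant in a given grain, and showing that $\frac{\eta_2}{\eta_1}=\sqrt{1+\sqrt2}$ is precisely the ratio at which those arcs reach the endpoints $\frac{\pi}{8}+\frac{k\pi}{4}$, is stated as an expectation rather than proved; likewise the rigorous form of the generalized jump condition at a merely smooth curved interface is assumed (the paper phrases it via polyconvex hulls of limiting gradient sets, which for the two-well case coincide with the quasiconvex hull, so that substitution is harmless). The paper itself defers these ``long and detailed calculations'' to \cite{u5}; as it stands your proposal reproduces the strategy but not the proof, and its one concrete quantitative assertion (which set of normals excludes which grain) is reversed.
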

\noindent The main ingredients in the proofs of Theorems \ref{planar} and \ref{45} are (i) a reduction to two dimensions using the plane strain result \cite{j39}, (ii) the characterization in \cite{j40} of the quasiconvexification of $K$, (iii) a generalization of the Hadamard jump condition that implies that the difference between the polyconvex hulls of suitably defined limiting sets of gradients, on either side of a point on the grain boundary where the normal is $n$, contains a rank-one matrix $a\otimes n$, and (iv) long and detailed calculations.
\section{Conclusions and perspectives}
\label{conclusion}
In this paper we have provided a framework for discussing the effects of grain geometry on the microstructure of polycrystals as described by the nonlinear elasticity model of martensitic transformations. This consists of two threads, a description and analysis of the grain geometry itself, and the use of generalizations of the Hadamard jump condition and other techniques to delimit possible zero-energy microstructures compatible with a given grain geometry. 

Both threads need considerable development. The quantitative description of polycrystals, as described for example in the book \cite{kurzydlowskiralph}, is a large subject which has many aspects (for example, sectioning and stochastic descriptions) for which a more rigorous treatment would be valuable. 

The problem of  determining possible zero-energy microstructures  is essentially one of multi-dimensional calculus, namely that of determining deformations compatible with a given geometry having  deformation gradients lying in, or  Young measures supported in,   the energy-wells  corresponding to each grain. Nevertheless we are very far from understanding how to solve it in any generality, one   obstacle being the well-known lack of a useful characterization of quasiconvexity (see, for example, \cite{p31}), which is known to be a key to understanding compatibility. The generalizations of the Hadamard jump conditions considered in \cite{u5} (see also \cite{iwaniecetal02}) are also insufficiently general and tractable. As well as for polycrystals, such generalized jump conditions are potentially relevant for the analysis of nonclassical austenite-martensite interfaces as proposed in \cite{j46,j48},   which have  been observed in CuAlNi \cite{j63,j64}, ultra-low hysteresis alloys \cite{song13}, and which have been suggested to be involved in steel \cite{koumatosmuehlemann15}. 

Despite the usefulness of the nonlinear elasticity theory, we have seen in connection with Theorem \ref{4gradthm} that in some situations the effects of interfacial energy can make its predictions of microstructure morphology inconsistent with experiment. This highlights the importance of developing a better understanding of how polycrystalline microstructure depends on the small parameters describing grain size and interfacial energy. 
\section*{Acknowledgements}  The research of JMB was supported by by the EC (TMR contract FMRX - CT  EU98-0229 and
ERBSCI**CT000670), by 
EPSRC
(GRlJ03466, the Science and Innovation award to the Oxford Centre for Nonlinear
PDE EP/E035027/1, and EP/J014494/1), the ERC under the EU's Seventh Framework Programme
(FP7/2007-2013) / ERC grant agreement no 291053 and
 by a Royal Society Wolfson Research Merit Award. We thank Philippe Vermaut for useful discussions concerning RuNb alloys.
\bibliography{gen2,balljourn,ballconfproc,ballprep}
\end{document}